\newenvironment{proof}[1][Proof]{\noindent\textbf{#1.} }{\ \rule{0.5em}{0.5em}}
\newtheorem{myproposition}{Proposition}
\newtheorem{mylemma}{Lemma}
\newtheorem{mydefinition}{Definition}
\newtheorem{mytheorem}{Theorem}
\begin{document}

\title{Vanishing quantum discord is necessary and sufficient for completely
positive maps}
\author{Alireza Shabani$^{(1)}$ and Daniel A. Lidar$^{(1,2,3)}$}
\affiliation{Departments of $^{(1)}$Electrical Engineering, $^{(2)}$Chemistry, and $%
^{(3)} $Physics\\
Center for Quantum Information Science \& Technology, University of Southern
California, Los Angeles, CA 90089, USA}
\pacs{03.65.Yz,03.67.-a,03.65.Ud}

\begin{abstract}
Two long standing open problems in quantum theory are to characterize the
class of initial system-bath states for which quantum dynamics is equivalent
to (1) a map between the initial and final system states, and (2) a
completely positive (CP) map. The CP map problem is especially important,
due to the widespread use of such maps in quantum information processing and
open quantum systems theory. Here we settle both these questions by showing
that the answer to the first is ``all'', with the resulting map being
Hermitian, and that the answer to the second is that CP maps arise
exclusively from the class of separable states with vanishing quantum
discord.
\end{abstract}

\maketitle

\textit{Introduction}.---Every natural object is in contact with its
environment, so its dynamics is that of an \textquotedblleft
open\textquotedblright\ system. The problem of the formulation and
characterization of the dynamics of open systems in the quantum regime has a
long and extensive history \cite{Breuer:book}. Its importance derives from
the desire and need to be able to consistently describe the evolution of
subsystems, without having to make reference to the entire universe.
Consider a quantum system $S$ coupled to another system $B$, with respective
Hilbert spaces $\mathcal{H}_{S}$ and $\mathcal{H}_{B}$, such that together
they form one isolated system, described by the joint initial state (density
matrix) $\rho _{SB}(0)$. $B$ represent the environment, or bath, so the
object of interest is the system $S$, whose state at time $t$ is governed
according to the standard quantum-mechanical prescription by the following
quantum dynamical process (QDP): 
\begin{equation}
\rho _{S}(t)=\mathrm{Tr}_{B}[\rho _{SB}(t)]=\mathrm{Tr}_{B}[U_{SB}(t)\rho
_{SB}(0)U_{SB}(t)^{\dag }].  \label{dynamics1}
\end{equation}
The propagator $U_{SB}(t)$ is a unitary operator, the solution to the
Schrodinger equation $\dot{U}_{SB}=-(i/\hbar )[H_{SB},U_{SB}]$, where $%
H_{SB} $ is the joint system-bath Hamiltonian. $\mathrm{Tr}_{B}$ represents
the partial trace operation, corresponding to an averaging over the bath
degrees of freedom \cite{Breuer:book}.

The QDP (\ref{dynamics1}) is a transformation from $\rho _{SB}(0)$ to $\rho
_{S}(t)$. However, since we are not interested in the state of the bath, it
is natural to ask: under which conditions is the QDP a map from $\rho
_{S}(0) $ to $\rho _{S}(t)$ \cite{CP-com1}? When is this map linear? When is
it completely positive (CP) \cite{Kraus:83}? These are fundamental questions
which have been the subject of intense studies with a long history \cite%
{Sudarshan:61,Pechukas:94,Alicki:95,Lindblad:96,Stelmachovic:01,Jordan:04,Zyczkowski:04,Carteret:05,Rodriguez:07}%
, also more recently in the context of non-Markovian master equations \cite%
{budini:053815breuer:022103vacchini:022112}. One reason that these questions
have attracted so much interest is the fundamental role played by CP maps in
quantum information \cite{Nielsen:book} and open quantum systems theory \cite%
{Breuer:book}. CP maps are the ``workhorse'' in these fields, and hence an
understanding of their domain of validity is essential. For this reason it
is perhaps surprising that the problem of identifying the general physical
conditions under which CP maps are valid has remained open since it was
first posed in a vigorous debate \cite{Pechukas:94,Alicki:95}. In
particular, while \emph{sufficient} conditions have been developed for
complete positivity \cite{Alicki:95,Rodriguez:07}, it is not known which is
the most general class of states for which the QDP (\ref{dynamics1}) is
always CP, for arbitrary $U_{SB}$. In this work we settle this old open
question. We prove that the QDP\ yields a CP map $\rho _{S}(0)\mapsto \rho
_{S}(t)$ iff $\rho _{SB}(0)$ has vanishing \textquotedblleft quantum
discord\textquotedblright\ \cite{Ollivier:01}, i.e., is purely classicaly
correlated.

In order to arrive at this result we introduce a class of states we call
\textquotedblleft special-linear\textquotedblright\ (SL), with the property
of being of full measure in the set of mixed bipartite states. We show that
the QDP\ (\ref{dynamics1}) is always a linear Hermitian map $\Phi_{\mathrm{H}%
}:\rho _{S}(0)\mapsto \rho _{S}(t)$ if $\rho _{SB}(0)$ in the SL-class.
Vanishing discord states are a subset of SL states, and CP maps are a subset
of Hermitian maps; we use the SL construction to prove our main result about
CP maps. We then argue that the restriction to the SL-class can be lifted,
and that in fact the QDP (\ref{dynamics1}) is \emph{always} a linear
Hermitian map, for arbitrary $\rho _{SB}(0)$. This result settles another
old open question: is quantum subsystem dynamics always a map, and if so, of
what kind?

\textit{Linear maps}.--- A linear map is ``Hermitian'' if it preserves the
Hermiticity of its domain. We first present an operator sum representation
for arbitrary and Hermitian linear maps:

\begin{mytheorem}
\label{th1}A map $\Phi :\mathfrak{M}_{n}\mapsto \mathfrak{M}_{m}$ (where $%
\mathfrak{M}_{n}$ is the space of $n\times n$ matrices) is linear iff it can
be represented as 
\begin{equation}
\Phi (\rho )\text{ }=\sum_{\alpha }E_{\alpha }\rho E_{\alpha }^{\prime
\dagger }  \label{linear}
\end{equation}%
where the \textquotedblleft left and right operation
elements\textquotedblright\ $\{E_{\alpha }\}$ and $\{E_{\alpha }^{\prime }\}$
are, respectively, $m\times n$ and $n\times m$ matrices.\newline
$\Phi_{\mathrm{H}}$ is a Hermitian map iff 
\begin{equation}
\Phi_{\mathrm{H}} (\rho )=\sum_{\alpha }c_{\alpha }E_{\alpha }\rho E_{\alpha
}^{\dagger },\quad c_{\alpha }\in \mathbb{R}.  \label{eq:QM}
\end{equation}
\end{mytheorem}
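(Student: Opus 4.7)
The ``if'' directions of both claims are immediate. Any expression $\sum_\alpha E_\alpha \rho E_\alpha'^\dagger$ is manifestly linear in $\rho$, and when $E_\alpha' = E_\alpha$ with $c_\alpha \in \mathbb{R}$ one checks directly that $\Phi_\mathrm{H}(\rho)^\dagger = \sum_\alpha c_\alpha E_\alpha \rho^\dagger E_\alpha^\dagger = \Phi_\mathrm{H}(\rho^\dagger)$. All the substantive content therefore lies in the two ``only if'' directions, and I would handle both uniformly through the Choi--Jamiolkowski (CJ) isomorphism.

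The first move is to attach to each linear $\Phi : \mathfrak{M}_n \to \mathfrak{M}_m$ its CJ operator $J(\Phi) = \sum_{ij} |i\rangle\langle j| \otimes \Phi(|i\rangle\langle j|)$ and verify the standard statement that $\Phi \mapsto J(\Phi)$ is a linear isomorphism between $\mathrm{Lin}(\mathfrak{M}_n,\mathfrak{M}_m)$ and operators on $\mathbb{C}^n \otimes \mathbb{C}^m$. The core dictionary lemma is that a rank-one operator $|x\rangle\langle y|$ on $\mathbb{C}^n \otimes \mathbb{C}^m$, under the vectorization identification that sends each ket to a matrix of the appropriate dimensions $E$, $E'$, is precisely the CJ operator of the elementary map $\rho \mapsto E \rho E'^\dagger$. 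Since any operator on $\mathbb{C}^n \otimes \mathbb{C}^m$ admits a finite decomposition into rank-one terms (an arbitrary outer-product expansion suffices), Part~1 follows by reading off one $(E_\alpha,E_\alpha')$ pair per term.

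For Part~2 I would first establish that $\Phi$ preserves Hermiticity iff $J(\Phi)$ is a Hermitian operator. One direction is built into the definition of $J$; the other follows by evaluating $\Phi(\cdot^\dagger)$ and $\Phi(\cdot)^\dagger$ on the matrix-unit basis and invoking linearity. Once $J(\Phi_\mathrm{H})$ is known to be Hermitian, the spectral theorem delivers $J(\Phi_\mathrm{H}) = \sum_\alpha c_\alpha |v_\alpha\rangle\langle v_\alpha|$ with real eigenvalues $c_\alpha$ and orthonormal $|v_\alpha\rangle$. Translating each $|v_\alpha\rangle$ back through vectorization to a single matrix $E_\alpha$ and applying the dictionary lemma to each rank-one term yields (\ref{eq:QM}).

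The only genuinely delicate step is nailing down the dictionary between rank-one operators on $\mathbb{C}^n \otimes \mathbb{C}^m$ and elementary maps $\rho \mapsto E\rho E'^\dagger$, in particular keeping careful track of indices and any complex conjugation introduced by the vectorization convention. Once that bookkeeping is fixed, both parts reduce to routine rank-one decompositions---arbitrary for Part~1 and spectral (hence forcing real coefficients) for Part~2. The conceptual punchline worth emphasizing is that the extra freedom to take $E_\alpha' \neq E_\alpha$ in (\ref{linear}) is exactly what is needed to accommodate linear maps that fail to preserve Hermiticity, and Hermiticity collapses this freedom back down to the familiar Kraus-like form with real coefficients.
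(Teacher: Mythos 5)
Your proof is correct. The paper itself offers no proof of Theorem~\ref{th1}, deferring instead to Hill (1973) and the authors' longer eprint; those references establish exactly this result via the Hermiticity of the characteristic (Choi) matrix, so your route---the Choi--Jamiolkowski dictionary between rank-one operators on $\mathbb{C}^{n}\otimes\mathbb{C}^{m}$ and elementary maps $\rho\mapsto E\rho E'^{\dagger}$, an arbitrary rank-one decomposition for linearity, and the spectral decomposition (forcing real coefficients and $E'_\alpha=E_\alpha$) for Hermiticity preservation---is essentially the intended argument, and all the steps you flag as delicate (the vectorization bookkeeping, and the equivalence of Hermiticity preservation with $\Phi(A^\dagger)=\Phi(A)^\dagger$ via linearity) do go through. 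One incidental remark: the dimensions quoted in the statement for $E'_\alpha$ ($n\times m$) are inconsistent with $E_\alpha\rho E_\alpha'^{\dagger}$ landing in $\mathfrak{M}_m$; your dictionary correctly yields both $E_\alpha$ and $E'_\alpha$ as $m\times n$ matrices.
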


(See Refs.~\cite{Hill:73,shabani-2006} for a proof). A linear map is called
\textquotedblleft completely positive\textquotedblright\ (CP) if it is a
Hermitian map with $c_{\alpha }\geq 0$ $\forall \alpha $. It turns out that
there is a tight connection between CP and Hermitian maps \cite%
{Jordan:04,Carteret:05}: a map is Hermitian iff it can be written as the
difference of two CP maps.

The definition of a CP map $\Phi _{\mathrm{CP}}$ implies that it can be
expressed in the Kraus operator sum representation \cite{Kraus:83}: $\rho
_{S}(t)=\sum_{\alpha }E_{\alpha }(t)\rho _{S}(0)E_{\alpha }^{\dagger
}(t)=\Phi _{\mathrm{CP}}(t)[\rho _{S}(0)]$. If the operation elements $%
E_{\alpha }$ satisfy ${\sum_{\alpha }}E_{\alpha }^{\dagger }E_{\alpha }=I$
then $\mathrm{Tr}[\rho _{S}(t)]=1$. The standard argument in favor of the
ubiquitousness of CP maps is that, since $S $ may be coupled with $B$, the
maps $\Phi _{\mathrm{Ph}}$ describing physical processes on $S$ should be
such that all their extensions into higher dimensional spaces should remain
positive, i.e., $\Phi _{\mathrm{Ph}}\otimes I_{n}\geq 0$ $\forall n\in 
\mathbb{Z}^{+}$, where $I_{n}$ is the $n$-dimensional identity operator;
this means that $\Phi _{\mathrm{Ph}}$ is a CP map \cite{Choi:75}. However,
one may question whether this is the right criterion for describing quantum
dynamics on the grounds that this imposes restrictions on the allowed class
of initial system-bath states \cite{Pechukas:94}. An alternative viewpoint
is to seek a description that applies to \emph{arbitrary} $\rho _{SB}(0)$.
However, it was recently shown \cite{Hayashi:03} that the QDP (\ref%
{dynamics1}) with \emph{arbitrary} $\rho _{SB}(0)$ becomes a CP map iff a
most restrictive condition is satisfied by $U_{SB}(t)$, namely, it must be
locally unitary: $U_{SB}(t)=U_{S}(t)\otimes U_{B}(t)$, i.e., the effective
system-bath interaction must vanish. If one gives up the consistency
condition $\rho _{S}=\mathrm{Tr}_{B}[\rho _{SB}] $ for all $\rho _{S}$, or
gives up linearity except in the weak coupling regime, CP maps arise for
more general initial states \cite{Alicki:95}.

A recent breakthrough due to Rodriguez \textit{et al}. \cite{Rodriguez:07}
shows that CP maps arise for arbitrary $U_{SB}$ even for certain
non-factorized initial conditions, namely provided the initial state $\rho
_{SB}(0)$ is invariant under the application of a complete set of orthogonal
one-dimensional projections on $S$, i.e., the state has vanishing quantum
discord. Here we show that vanishing quantum discord is not only sufficient
but also necessary for the QDP to induce a CP map (Theorem~\ref{VQD}). We go
further and ask whether the larger class of Hermitian maps is compatible
with general initial conditions. We shall show that this is indeed the case
(Theorem~\ref{H-States}).

\textit{Special-linear states}.--- We now define a class of states we call
\textquotedblleft special-linear\textquotedblright\ (SL) states for which
the QDP (\ref{dynamics1}) always results in a linear, Hermitian map. An
arbitrary bipartite state on $\mathcal{H}_{S}\otimes \mathcal{H}_{B}$ can be
written as 
\begin{equation}
\rho _{SB}=\sum_{ij}\varrho _{ij}|i\rangle \langle j|\otimes \phi _{ij},
\label{eq:rho_SB}
\end{equation}%
where $\{|i\rangle \}_{i=1}^{\dim \mathcal{H}_{S}}$ is an orthonormal basis
for $\mathcal{H}_{S}$, and $\{\phi _{ij}\}_{i,j=1}^{\dim \mathcal{H}_{S}}:%
\mathcal{H}_{B}\mapsto \mathcal{H}_{B}$ are normalized such that if $\mathrm{%
Tr}[\phi _{ij}]\neq 0$ then $\mathrm{Tr}[\phi _{ij}]=1$. The corresponding
reduced system and bath states are then $\rho _{S}=\sum_{(i,j)\in \mathcal{C}%
}\varrho _{ij}|i\rangle \langle j|$, where $\mathcal{C}\equiv \{(i,j)|%
\mathrm{Tr}[\phi _{ij}]=1\}$, and $\rho _{B}(0)=\sum_{i}\varrho _{ii}\phi
_{ii}$. Hermiticity and normalization of $\rho _{SB}$, $\rho _{S}$, and $%
\rho _{B}$ imply $\varrho _{ij}=\varrho _{ji}^{\ast }$, $\phi _{ij}=\phi
_{ji}^{\dag }$, and $\sum_{i}\varrho _{ii}=1$.

\begin{mydefinition}
\label{def:SL}A bipartite state $\rho _{SB}$ is in the SL-class\ iff either $%
\mathrm{Tr}[\phi _{ij}]=1$ or $\phi _{ij}=0,$ $\forall i,j$.
\end{mydefinition}

The following is a key result which we prove at the end:

\begin{mytheorem}
\label{H-States}If $\rho _{SB}(0)$ is an SL-class state then the QDP (\ref%
{dynamics1}) is a linear, Hermitian map $\Phi _{\mathrm{H}}:\rho
_{S}(0)\mapsto \rho _{S}(t)$.
\end{mytheorem}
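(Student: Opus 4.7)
The plan is to substitute the SL form of $\rho_{SB}(0)$ into the QDP (\ref{dynamics1}), execute the partial trace, and recognize the result as a linear function of $\rho_S(0)$. In the expansion $\rho_{SB}(0)=\sum_{ij}\varrho_{ij}|i\rangle\langle j|\otimes\phi_{ij}$, the SL hypothesis forces every term with $(i,j)\notin\mathcal{C}$ to drop out, since there $\mathrm{Tr}[\phi_{ij}]\neq 1$ is forbidden by SL unless $\phi_{ij}=0$. Consequently,
\[
\rho_S(t)=\sum_{(i,j)\in\mathcal{C}}\varrho_{ij}\,A_{ij}(t),
\]
with $A_{ij}(t):=\mathrm{Tr}_B\!\left[U_{SB}(t)(|i\rangle\langle j|\otimes\phi_{ij})U_{SB}(t)^\dagger\right]$. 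Since exactly the same indices contribute to the reduced state, $\rho_S(0)=\sum_{(i,j)\in\mathcal{C}}\varrho_{ij}|i\rangle\langle j|$, every coefficient $\varrho_{ij}$ appearing in $\rho_S(t)$ equals the matrix element $\langle i|\rho_S(0)|j\rangle$. Hence $\rho_S(t)=\sum_{(i,j)\in\mathcal{C}}\langle i|\rho_S(0)|j\rangle\,A_{ij}(t)$ is manifestly a well-defined linear function of $\rho_S(0)$; the SL condition is precisely what eliminates the potentially problematic case in which some $\varrho_{ij}$ would influence the dynamics without being recoverable from $\rho_S(0)$.

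For Hermiticity I would exploit the symmetries forced by Hermiticity of $\rho_{SB}(0)$, namely $\phi_{ij}=\phi_{ji}^\dagger$ and $\varrho_{ij}=\varrho_{ji}^*$, together with symmetry of $\mathcal{C}$ under the swap $(i,j)\leftrightarrow(j,i)$ (which follows from $\mathrm{Tr}[\phi_{ij}]=\overline{\mathrm{Tr}[\phi_{ji}]}$). A short calculation then gives $A_{ij}(t)^\dagger=A_{ji}(t)$, so $\rho_S(t)^\dagger=\rho_S(t)$ whenever $\rho_S(0)$ is Hermitian. The induced map therefore preserves Hermiticity, which by Theorem \ref{th1} is equivalent to admitting the operator-sum form (\ref{eq:QM}) with real coefficients, completing the proof at the abstract level.

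The hard part, if one insists on producing the representation (\ref{eq:QM}) explicitly, is the spectral bookkeeping of the bath data. Each diagonal $\phi_{ii}$ is a genuine density matrix and decomposes into rank-one projectors with non-negative weights. The off-diagonal pairs $\{\phi_{ij},\phi_{ji}\}$ with $i\neq j$ are not individually Hermitian, so one must first form the Hermitian combination $|i\rangle\langle j|\otimes\phi_{ij}+|j\rangle\langle i|\otimes\phi_{ji}$ on $\mathcal{H}_S\otimes\mathcal{H}_B$ and spectral-decompose that, obtaining real but possibly negative eigenvalues; absorbing $U_{SB}(t)$ into the resulting eigenvectors supplies the $E_\alpha$ and the signs furnish the $c_\alpha$. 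Since Theorem \ref{th1} already guarantees the existence of (\ref{eq:QM}) from Hermiticity-preservation alone, this explicit construction is optional for Theorem \ref{H-States} itself, though it is likely to be needed downstream (e.g., when specialising to vanishing-discord states in order to force all $c_\alpha\geq 0$).
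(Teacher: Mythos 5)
Your proposal is correct and follows essentially the same route as the paper: the SL condition kills the $(i,j)\notin\mathcal{C}$ terms so that every surviving coefficient is recoverable as $\varrho_{ij}=\langle i|\rho_S(0)|j\rangle$ (the paper writes this as $P_i\rho_S(0)P_j=\varrho_{ij}|i\rangle\langle j|$), giving linearity, and Hermiticity follows from $\phi_{ij}=\phi_{ji}^\dagger$ exactly as you argue. The only difference is cosmetic: the paper unpacks your $A_{ij}(t)$ via the singular value decomposition of $\phi_{ij}$ to exhibit explicit left and right operation elements, which, as you note, Theorem \ref{th1} renders optional for the statement itself.
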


Next we need to be precise about the block structure associated with a
matrix $A=[a_{ij}]$:

\begin{mydefinition}
We call two diagonal elements $a_{i_{1}i_{1}}$ and $a_{i_{B}i_{B}}$
\textquotedblleft block-connected via the path $\{i_{b}\}_{b=2}^{B-1}$%
\textquotedblright\ if there exists a set of unequal indexes $%
\{i_{b}\}_{b=1}^{B}$ such that $\{a_{i_{b}i_{b+1}}\}_{b=1}^{B-1}$ are all
non-zero, i.e., they can be connected via a path that involves only
horizontal and vertical (but not diagonal) moves. The \textquotedblleft
block-index set\textquotedblright\ $\mathcal{D}_{A}^{(\alpha )}$ is the set
of all index pairs $\{(i,j)\}$ of the elements of the $\alpha $th block of $%
A $.
\end{mydefinition}

This is just the standard notion of a block in a matrix, possibly before
permutation matrices are applied to sort it into the standard block-diagonal
structure. We are now ready to state our main result.

\begin{mylemma}
\label{th:CP}Let $\rho _{SB}(0)$ [Eq.~(\ref{eq:rho_SB})] be an SL-class
state, let $\Phi \equiv \lbrack \phi _{ij}]=\bigoplus_{\alpha }\Phi
^{(\alpha )}$ (a supermatrix), and let $\{\Pi _{\alpha }\equiv
\sum_{(i,i)\in \mathcal{D}_{\Phi }^{(\alpha )}}|i\rangle \langle
i|\}_{\alpha }$ be a complete set of projectors from $\mathcal{H}_{S}$ to $%
\mathcal{H}_{S}$. Let $\mathcal{C}_{\Phi }^{(\alpha )}\equiv \{(i,j)\in 
\mathcal{D}_{\Phi }^{(\alpha )}|\mathrm{Tr}[\phi _{ij}]=1\}$ and 
\begin{equation}
\rho _{S}^{(\alpha )}\equiv \Pi _{\alpha }\rho _{S}(0)\Pi _{\alpha
}/p_{\alpha }=\sum_{(i,j)\in \mathcal{C}_{\Phi }^{(\alpha )}}\varrho
_{ij}|i\rangle \langle j|/p_{\alpha },  \label{eq:PrP}
\end{equation}%
where $p_{\alpha }=\mathrm{Tr}[\rho _{S}(0)\Pi _{\alpha }]$. Let $\rho
_{B}^{(\alpha )}$ be a density matrix. The Hermitian map $\Phi _{\mathrm{H}%
}:\rho _{S}(0)\mapsto \rho _{S}(t)$ induced by the QDP (\ref{dynamics1}) is
a CP map iff $(\Phi ^{(\alpha )})_{ij}=\{0$ $\mathrm{or}$ $\rho
_{B}^{(\alpha )}\}$ $\forall (i,j)\in \mathcal{D}_{\Phi }^{(\alpha )}$:%
\begin{equation}
\rho _{SB}(0)=\sum_{\alpha }p_{\alpha }\rho _{S}^{(\alpha )}\otimes \rho
_{B}^{(\alpha )}.  \label{eq:rSB}
\end{equation}
\end{mylemma}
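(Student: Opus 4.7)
My plan is to prove both directions of the ``iff''. For sufficiency, the hypothesized product-sum form of $\rho _{SB}(0)$ directly exhibits $\Phi _{\mathrm{H}}$ as a sum of compositions of CP operations; for necessity, the Choi--Jamiolkowski isomorphism together with the block structure of the supermatrix $\Phi =[\phi _{ij}]$ and a rigidity argument within each block forces all nonzero $\phi _{ij}$ in a block to coincide with a common density matrix $\rho _{B}^{(\alpha )}$.

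For sufficiency I would substitute $\rho _{SB}(0)=\sum _{\alpha }\Pi _{\alpha }\rho _{S}(0)\Pi _{\alpha }\otimes \rho _{B}^{(\alpha )}$ (which follows from Eq.~(\ref{eq:PrP})) into Eq.~(\ref{dynamics1}), obtaining
\begin{equation*}
\rho _{S}(t)=\sum _{\alpha }\mathrm{Tr}_{B}\bigl[U\bigl(\Pi _{\alpha }\rho _{S}(0)\Pi _{\alpha }\otimes \rho _{B}^{(\alpha )}\bigr)U^{\dagger }\bigr].
\end{equation*}
Each summand is a composition of manifestly CP operations (L\"{u}ders projection by the orthogonal $\Pi _{\alpha }$, tensoring with the fixed density $\rho _{B}^{(\alpha )}$, unitary conjugation by $U$, and partial trace over $B$); since a sum of CP maps is CP, $\Phi _{\mathrm{H}}$ is CP.

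For necessity I would assume $\Phi _{\mathrm{H}}$ is CP (which must be read as holding for every $U$, since the conclusion is $U$-independent). By Choi's theorem, $J=(\Phi _{\mathrm{H}}\otimes I)(|\Omega \rangle \langle \Omega |)\geq 0$; substituting the QDP and pulling the partial trace outside gives $J=\mathrm{Tr}_{B}[(I_{S^{\prime }}\otimes U)\,Y\,(I_{S^{\prime }}\otimes U^{\dagger })]$ with $Y=\sum _{(i,j)\in \mathcal{C}}|i\rangle \langle j|_{S^{\prime }}\otimes |i\rangle \langle j|_{S}\otimes \phi _{ij}$. A short computation shows that $Y$ is supported only on $\mathrm{span}\{|ii\rangle _{S^{\prime }S}\}\otimes \mathcal{H}_{B}$ and coincides there with $\Phi $ viewed as an operator on $\mathcal{H}_{S}\otimes \mathcal{H}_{B}$; specializing $U=I_{S}\otimes V_{B}$ then reduces $J\geq 0\;\forall U$ to $\Phi \geq 0$. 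The block decomposition $\Phi =\bigoplus _{\alpha }\Phi ^{(\alpha )}$ further reduces this to $\Phi ^{(\alpha )}\geq 0$ block-by-block. Fixing a block $\alpha $ and any $(i,j)$ with $\phi _{ij}\neq 0$, I would test the $2\times 2$ PSD principal submatrix $\bigl(\begin{smallmatrix}\phi _{ii}&\phi _{ij}\\\phi _{ji}&\phi _{jj}\end{smallmatrix}\bigr)$ against vectors of the form $|0\rangle |x\rangle -|1\rangle U|x\rangle $ for $U$ unitary on $\mathcal{H}_{B}$: this yields a PSD operator of trace $2-2\,\mathrm{Re}\,\mathrm{Tr}[\phi _{ij}U]$, whence $\mathrm{Re}\,\mathrm{Tr}[\phi _{ij}U]\leq 1$ for all unitary $U$, i.e.\ $\sum _{k}s_{k}(\phi _{ij})\leq 1$. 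At $U=I$ the trace vanishes (by the SL condition $\mathrm{Tr}[\phi _{ij}]=1$), so the operator itself vanishes: $\phi _{ii}+\phi _{jj}=\phi _{ij}+\phi _{ji}$. Combining $\sum _{k}s_{k}(\phi _{ij})\leq 1$ with $1=|\mathrm{Tr}[\phi _{ij}]|\leq \sum _{k}|\lambda _{k}(\phi _{ij})|\leq \sum _{k}s_{k}(\phi _{ij})$ saturates both the triangle and the singular-value inequalities, forcing $\phi _{ij}$ to be normal with nonnegative eigenvalues---hence itself a density matrix. Then $\phi _{ji}=\phi _{ij}^{\dagger }=\phi _{ij}$ gives $\phi _{ij}=(\phi _{ii}+\phi _{jj})/2$, and the Schur-complement condition for the $2\times 2$ PSD block reduces to $(\phi _{ii}\phi _{jj}^{-1/2}-\phi _{jj}^{1/2})(\phi _{jj}^{-1/2}\phi _{ii}-\phi _{jj}^{1/2})\leq 0$, forcing $\phi _{ii}=\phi _{jj}$. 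Propagating the resulting equality $\phi _{ii}=\phi _{jj}=\phi _{ij}$ along the block's connectivity path identifies all nonzero $\phi _{ij}$ in block $\alpha $ with a single density matrix $\rho _{B}^{(\alpha )}$, yielding Eq.~(\ref{eq:rSB}).

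The main obstacle I anticipate is the within-block rigidity step. The bare $2\times 2$ positivity condition gives only Cauchy--Schwarz-type inequalities, not the sharp equalities needed; the crucial leverage is the SL trace-one normalization, which at $U=I$ turns the test PSD operator into a \emph{trace-zero} PSD operator and hence into the zero operator, and which then via the singular-value argument over varying $U$ pins down $\phi _{ij}$ as a density matrix. A Schur-complement calculation then finishes the equality $\phi _{ii}=\phi _{jj}$. A secondary technical point is the equivalence $J\geq 0\,\forall U\Leftrightarrow \Phi \geq 0$, which relies on the observation that $Y$ is supported on the diagonal subspace of $S^{\prime }S$ and on the freedom to choose $U$ acting only on $\mathcal{H}_{B}$.
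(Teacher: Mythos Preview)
Your sufficiency argument is fine and essentially the paper's: both exhibit the induced map in Kraus form, the paper by writing out explicit operation elements $\sqrt{\lambda_i^\alpha}\langle\beta_i|U|\lambda_j^\alpha\rangle\Pi_\alpha$, you by observing that each summand is a composite of CP operations.

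The necessity argument has a genuine gap at the reduction step. You assert that specializing $U=I_S\otimes V_B$ reduces ``$J\geq 0$ for all $U$'' to ``$\Phi\geq 0$ as an operator on $\mathcal{H}_S\otimes\mathcal{H}_B$.'' But with $U=I_S\otimes V_B$ the bath conjugation is annihilated by the partial trace:
\[
J=\mathrm{Tr}_B\Bigl[\sum_{ij}|i\rangle\langle j|\otimes|i\rangle\langle j|\otimes V_B\phi_{ij}V_B^{\dagger}\Bigr]=\sum_{(i,j)\in\mathcal{C}}|ii\rangle\langle jj|,
\]
a \emph{fixed} matrix independent of $V_B$ that imposes no constraint on the operator content of the $\phi_{ij}$. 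More generally, ``$J\geq 0$ for all $U$ $\Rightarrow$ $Y\geq 0$'' would require every unit vector in $\mathcal{H}_{S'}\otimes\mathcal{H}_S\otimes\mathcal{H}_B$ to be of the form $(I_{S'}\otimes U^{\dagger})(|\chi\rangle_{S'S}\otimes|b\rangle_B)$, i.e.\ to have Schmidt rank one across the $S'S{:}B$ cut; this is false, so the operator positivity of $\Phi$ is not established and everything you build on it (the trace-zero PSD trick, the singular-value saturation, the Schur complement) is currently floating.

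The paper never claims $\Phi\geq 0$. Instead it selects the nontrivially $S$--$B$ entangling family $U=\tfrac{1}{\sqrt{2}}(I\otimes I-iX\otimes A)$, with $X$ a fixed transposition on $S$ and $A$ ranging over all Hermitian unitaries on $\mathcal{H}_B$, and computes the resulting $4\times 4$ \emph{scalar} principal submatrix of the Choi matrix in terms of $\mathrm{Tr}[A\phi_{kk}]$, $\mathrm{Tr}[A\phi_{kl}]$, $\mathrm{Tr}[A\phi_{ll}]$. Positivity of its $2\times 2$ sub-blocks forces these traces to coincide, and the lemma ``$\mathrm{Tr}[AX]=0$ for all Hermitian unitary $A$ $\Rightarrow X=0$'' upgrades the scalar equalities to the operator identities $\phi_{kk}=\phi_{kl}=\phi_{ll}$. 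If you can supply a legitimate bridge from ``$\Phi_{\mathrm H}$ CP for all $U$'' to the $2\times 2$ operator-block positivity $\bigl(\begin{smallmatrix}\phi_{ii}&\phi_{ij}\\ \phi_{ji}&\phi_{jj}\end{smallmatrix}\bigr)\geq 0$, your downstream analysis would be a pleasant alternative to the paper's $4\times 4$ computation; but that bridge must involve a $U$ that genuinely mixes $S$ and $B$, and it is precisely the nontrivial part of the proof.
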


Clearly, $\rho _{S}^{(\alpha )}$ can be thought of as the post-measurement
state arising with probability $p_{\alpha }$ from $\rho _{S}(0)$ after the
application of the projective measurement described by the set $\{\Pi
_{\alpha }\}$. Moreover, $\rho _{SB}(0)$ is not merely separable:

\begin{mytheorem}
\label{VQD}The Hermitian map $\Phi _{\mathrm{H}}:\rho _{S}(0)\mapsto \rho
_{S}(t)$ is a CP map iff the initial system-bath state $\rho _{SB}(0)$\ has
vanishing quantum discord (VQD), i.e., can be written as:%
\begin{equation}
\rho _{SB}(0)=\sum_{k,\alpha }\Pi _{\alpha }^{k}\rho _{SB}(0)\Pi _{\alpha
}^{k},  \label{eq:rSB-P}
\end{equation}%
where $\{\Pi _{\alpha }^{k}\}$\ are one-dimensional projectors onto the
eigenvectors of $\rho _{S}^{(\alpha )}$, and $\sum_{k}\Pi _{\alpha }^{k}=\Pi
_{\alpha }$.
\end{mytheorem}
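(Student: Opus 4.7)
The plan is to establish the two implications of the iff statement separately, relying on an earlier result for one direction and on Lemma~\ref{th:CP} for the other.

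For the reverse direction $(\ref{eq:rSB-P})\Rightarrow$ CP: a state of the form~(\ref{eq:rSB-P}) is by construction invariant under the system-side dephasing by the complete set of one-dimensional orthogonal projectors $\{\Pi_\alpha^k\}$, which is precisely the sufficient condition for the induced dynamical map to be CP established by Rodriguez \emph{et al.}~\cite{Rodriguez:07} and already flagged in the introduction. This direction therefore follows essentially by reference.

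For the forward direction CP $\Rightarrow (\ref{eq:rSB-P})$: I would first invoke Lemma~\ref{th:CP} to replace the CP hypothesis by the block-product decomposition $\rho_{SB}(0)=\sum_\alpha p_\alpha\,\rho_S^{(\alpha)}\otimes\rho_B^{(\alpha)}$, and then spectrally decompose each reduced block state in its support, $\rho_S^{(\alpha)}=\sum_k\lambda_k^{(\alpha)}\Pi_\alpha^k$, with one-dimensional projectors $\Pi_\alpha^k=|k_\alpha\rangle\langle k_\alpha|$ onto eigenvectors lying in $\Pi_\alpha\mathcal{H}_S$; appending zero-eigenvalue eigenprojectors if necessary ensures $\sum_k\Pi_\alpha^k=\Pi_\alpha$. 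Using the block-orthogonality $\Pi_\alpha^k\Pi_\beta=\delta_{\alpha\beta}\Pi_\alpha^k$, a direct substitution gives $\sum_{k,\alpha}\Pi_\alpha^k\rho_{SB}(0)\Pi_\alpha^k=\sum_\alpha p_\alpha\rho_S^{(\alpha)}\otimes\rho_B^{(\alpha)}=\rho_{SB}(0)$, which is exactly~(\ref{eq:rSB-P}).

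I expect the main conceptual obstacle to be upstream, in the proof of Lemma~\ref{th:CP} itself; once that lemma is in hand, Theorem~\ref{VQD} reduces to the spectral-decomposition bookkeeping above together with the invocation of~\cite{Rodriguez:07}. The only minor technical point in the forward step is making sure the 1D projectors $\Pi_\alpha^k$ built from the spectrum of $\rho_S^{(\alpha)}$ genuinely satisfy the completeness relation $\sum_k\Pi_\alpha^k=\Pi_\alpha$ demanded by the theorem; this is precisely why zero-eigenvalue eigenprojectors must be explicitly appended when $\rho_S^{(\alpha)}$ fails to be full-rank on $\Pi_\alpha\mathcal{H}_S$, so that the dephasing channel on the right-hand side of~(\ref{eq:rSB-P}) acts on all of $\mathcal{H}_S$ rather than merely on the support of $\rho_S(0)$.
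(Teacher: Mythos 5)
Your proof is correct and follows essentially the same route as the paper: the necessity direction is identical (invoke Lemma~\ref{th:CP} to obtain $\rho_{SB}(0)=\sum_\alpha p_\alpha\,\rho_S^{(\alpha)}\otimes\rho_B^{(\alpha)}$, spectrally decompose each $\rho_S^{(\alpha)}$ into one-dimensional eigenprojectors, and verify invariance under the dephasing $\{\Pi_\alpha^k\}$), and your explicit padding with zero-eigenvalue eigenprojectors so that $\sum_k\Pi_\alpha^k=\Pi_\alpha$ is a detail the paper glosses over. The only divergence is in the sufficiency direction: you close it by citing Rodriguez \emph{et al.}~\cite{Rodriguez:07}, whereas the paper stays self-contained by noting that a state satisfying Eq.~(\ref{eq:rSB-P}) is a post-(non-selective)-measurement state, hence of the product form~(\ref{eq:rSB}) with one-dimensional system blocks, to which the explicit Kraus construction in the sufficiency half of Lemma~\ref{th:CP} applies; either closing move is legitimate.
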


\begin{proof}
By expanding $\rho _{S}^{(\alpha )}$ as $\sum_{k}p_{\alpha }^{k}\Pi _{\alpha
}^{k}$, with $p_{\alpha }^{k}=\mathrm{Tr}[\rho _{S}(0)\Pi _{\alpha
}^{k}]\geq 0$ and $\sum_{k}p_{\alpha }^{k}=1$, we obtain using Eq. (\ref%
{eq:rSB}): $\rho _{SB}(0)=\sum_{\alpha }\rho _{S}^{(\alpha )}\otimes \rho
_{B}^{(\alpha )}=\sum_{k,\alpha }p_{\alpha }^{k}\Pi _{\alpha }^{k}\otimes
\rho _{B}^{(\alpha )}$, which implies Eq. (\ref{eq:rSB-P}). On the other
hand $\sum_{k,\alpha }\Pi _{\alpha }^{k}\rho _{SB}(0)\Pi _{\alpha }^{k}$ is
the state after a non-selective projective measurement $\{\Pi _{\alpha
}^{k}\}$ on $S$, so that $\rho _{SB}(0)=\sum_{k,\alpha }p_{\alpha }^{k}\Pi
_{\alpha }^{k}\otimes \rho _{B}^{(\alpha )}$.
\end{proof}

The quantum discord has a deep information-theoretic origin and
interpretation, for the details of which we refer the reader to Ref.~\cite%
{Ollivier:01}; we shall merely remark that when the discord vanishes all the
information about $B$ that exists in the $S$-$B$ correlations is locally
recoverable just from the state of $S$, which is not the case for a general
separable state of $S$ and $B$. In this sense a VQD state is
\textquotedblleft completely classical\textquotedblright .

\begin{proof}[Proof of Lemma \protect\ref{th:CP}]
We start with necessity; sufficiency will turn out to be trivial. Let us
assume that the Hermitian map $\Phi _{\mathrm{H}}:\rho _{S}(0)\mapsto \rho
_{S}(t)$ induced by the QDP, $\rho _{S}(t)=\mathrm{Tr}_{B}[U\rho
_{SB}(0)U^{\dag }],$\ is CP, and determine the class of allowed initial
states. We start from an SL-class state since we know (Theorem \ref{H-States}%
) that in this case the QDP (\ref{dynamics1}) is indeed equivalent to a
Hermitian map. Let $\widetilde{M}=|\Psi \rangle \langle \Psi |$, where $%
|\Psi \rangle =\frac{1}{\sqrt{d_{S}}}\sum_{i=1}^{d_{S}}|i\rangle \otimes
|i\rangle $ is a maximally entangled state over $\mathcal{H}_{S}\otimes 
\mathcal{H}_{S}$, and where $d_{S}=\dim \mathcal{H}_{S}$. It follows
directly from Eq.~(\ref{Linear})\ below that $\Phi _{\mathrm{H}}[|i\rangle
\langle j|]=\mathrm{Tr}_{B}[U|i\rangle \langle j|\otimes \phi
_{ij}U^{\dagger }]$. Thus the Choi matrix \cite{Choi:75} for $\Phi _{\mathrm{%
L}}$ is%
\begin{align}
\mathcal{M}& \equiv (\mathcal{I}\otimes \Phi _{\mathrm{H}})[\widetilde{M}]=%
\frac{1}{d_{S}}\sum_{ij}|i\rangle \langle j|\otimes \Phi _{\mathrm{H}%
}[|i\rangle \langle j|]  \notag \\
& =\frac{1}{d_{S}}\sum_{ij}|i\rangle \langle j|\otimes \mathrm{Tr}%
_{B}[U|i\rangle \langle j|\otimes \phi _{ij}U^{\dagger }],
\end{align}%
We assume that $\mathcal{M}$ is positive as this is equivalent to $\Phi _{%
\mathrm{H}}$ being CP \cite{Choi:75}. A useful fact is that a matrix $A\ $is
positive iff every principal submatrix of $A$ is positive (a principal
submatrix is the matrix obtained by deleting from $A$ some number of columns
and rows with equal indexes). Therefore, let us focus on the pair of rows
and columns $(k,l)$ $(k\neq l)$ of $d_{S}\mathcal{M}$, and consider the $%
2\times 2$ principal submatrix 
\begin{equation}
P_{kl}=\left( 
\begin{array}{cc}
\mathrm{Tr}_{B}[U|k\rangle \langle k|\otimes \phi _{kk}U^{\dagger }] & 
\mathrm{Tr}_{B}[U|k\rangle \langle l|\otimes \phi _{kl}U^{\dagger }] \\ 
\mathrm{Tr}_{B}[U|l\rangle \langle k|\otimes \phi _{lk}U^{\dagger }] & 
\mathrm{Tr}_{B}[U|l\rangle \langle l|\otimes \phi _{ll}U^{\dagger }]%
\end{array}%
\right) .
\end{equation}%
The submatrix $P_{kl}$ must be positive for any $U$, and we choose to
examine the case $U=\frac{1}{\sqrt{2}}(I\otimes I-iX\otimes A)$, where $A$
is Hermitian and unitary (hence $A^{2}=I$), and $X=|k\rangle \langle
l|+|l\rangle \langle k|+\sum_{i\neq k,l}|i\rangle \langle i|$. This will
allow us to find restrictions on $\{\phi _{kl}\}$. Note that it follows from
Hermiticity of $A$, $\phi _{kk}$ and $\phi _{ll}$, and from $\phi
_{kl}^{\dag }=\phi _{lk}$, that $\mathrm{Tr}[A\phi _{kk}],\mathrm{Tr}[A\phi
_{ll}]\in \mathbb{R}$, and that $\mathrm{Tr}[A\phi _{kl}]=(\mathrm{Tr}[A\phi
_{lk}])^{\ast }$. Thus some algebra yields: 
\begin{eqnarray}
P_{kl} &=&\frac{1}{4}\left( 
\begin{array}{cccc}
t_{kk} & ia & ib & t_{kl} \\ 
-ia & t_{kk} & t_{kl} & -ib \\ 
-ib^{\ast } & t_{kl} & t_{ll} & -ic \\ 
t_{kl} & ib^{\ast } & ic & t_{ll}%
\end{array}%
\right) ,  \label{eq:Pkl} \\
a &=&\mathrm{Tr}[A\phi _{kk}]\in \mathbb{R},\quad b=\mathrm{Tr}[A\phi _{kl}],
\notag \\
c &=&\mathrm{Tr}[A\phi _{ll}]\in \mathbb{R},\quad t_{ij}=\mathrm{Tr}[\phi
_{ij}]=1\text{ or }0.  \notag
\end{eqnarray}%
To proceed we require the following Lemma (proof at end):

\begin{mylemma}
\label{lem:A}If $\mathrm{Tr}[AX]=0$ for any unitary and Hermitian matrix $A$
then $X=0$.
\end{mylemma}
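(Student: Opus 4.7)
My plan is to exploit the very restrictive structure of matrices that are simultaneously Hermitian and unitary: such an $A$ satisfies $A^{2}=I$ and $A=A^{\dagger}$, so its spectrum lies in $\{-1,+1\}$ and it can be written as $A=2P-I$ for some orthogonal projection $P$. Conversely, for every orthogonal projection $P$ the matrix $2P-I$ is Hermitian and unitary. Thus the hypothesis $\mathrm{Tr}[AX]=0$ for all such $A$ is equivalent to
\begin{equation}
2\,\mathrm{Tr}[PX]=\mathrm{Tr}[X]\quad\text{for every orthogonal projection }P.
\end{equation}

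Next I would specialize $P$ in two steps. Taking $P=I$ (which is itself Hermitian and unitary) gives $2\,\mathrm{Tr}[X]=\mathrm{Tr}[X]$, hence $\mathrm{Tr}[X]=0$. Feeding this back in, the condition collapses to $\mathrm{Tr}[PX]=0$ for every orthogonal projection $P$. Now restrict to rank-one projectors $P=|v\rangle\langle v|$ with $|v\rangle$ a unit vector; this yields $\langle v|X|v\rangle=0$ for every unit vector $|v\rangle$.

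Finally I would conclude by polarization. Write $X=H_{1}+iH_{2}$ with $H_{1}=(X+X^{\dagger})/2$ and $H_{2}=(X-X^{\dagger})/(2i)$ both Hermitian. Then $\langle v|X|v\rangle=\langle v|H_{1}|v\rangle+i\langle v|H_{2}|v\rangle$, and each term is real, so both expectation values vanish on every unit vector. A Hermitian matrix with vanishing diagonal in every orthonormal basis must be zero (equivalently, apply the standard polarization identity to the trial vectors $(|u\rangle+|w\rangle)/\sqrt{2}$ and $(|u\rangle+i|w\rangle)/\sqrt{2}$ to recover the off-diagonal entries $\langle u|H_{j}|w\rangle$). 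Hence $H_{1}=H_{2}=0$ and $X=0$.

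There is no real obstacle here; the only point that needs a moment of care is that $X$ is not assumed Hermitian, so one cannot directly invoke the spectral theorem on $X$ itself. Splitting $X$ into its Hermitian and skew-Hermitian parts circumvents this cleanly, and the rest is routine linear algebra.
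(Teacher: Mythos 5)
Your proof is correct and follows essentially the same route as the paper: both reduce the hypothesis to $\langle\psi|X|\psi\rangle=0$ for all unit vectors $\psi$ by comparing $A=I$ against rank-one reflections $\pm(2|\psi\rangle\langle\psi|-I)$, and then conclude $X=0$. The only difference is that you spell out the final polarization step (splitting $X$ into Hermitian and skew-Hermitian parts), which the paper simply asserts.
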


\begin{myproposition}
\label{prop0}If $\phi _{kk}=0$ or $\phi _{ll}=0$ then $\phi _{kl}=\phi
_{lk}=0$.
\end{myproposition}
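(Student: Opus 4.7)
The plan is to exploit the fact that a positive semidefinite matrix with a zero on the diagonal must have zero everywhere in the corresponding row and column, combined with Lemma \ref{lem:A} which lets us conclude $\phi_{kl}=0$ from $\mathrm{Tr}[A\phi_{kl}]=0$ for all Hermitian unitary $A$.

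First I would suppose $\phi_{kk}=0$. Because $\rho_{SB}(0)$ is in the SL-class, this means $t_{kk}=\mathrm{Tr}[\phi_{kk}]=0$, and moreover $a=\mathrm{Tr}[A\phi_{kk}]=0$ for every $A$. Inspecting the explicit form (\ref{eq:Pkl}) of $P_{kl}$, the $(1,1)$ entry now vanishes. Since $P_{kl}$ must be positive semidefinite (this is where we use the assumption that $\Phi_{\mathrm{H}}$ is CP, hence $\mathcal{M}\ge 0$, hence every principal submatrix of $d_S\mathcal{M}$ is $\ge 0$), a standard fact about PSD matrices forces the entire first row and column of $P_{kl}$ to vanish. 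This yields in particular $t_{kl}=0$ and $b=\mathrm{Tr}[A\phi_{kl}]=0$ for every Hermitian unitary $A$. By Lemma \ref{lem:A} this gives $\phi_{kl}=0$, and then $\phi_{lk}=\phi_{kl}^{\dagger}=0$.

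The case $\phi_{ll}=0$ is symmetric: it makes $t_{ll}=0$ and $c=0$, so the diagonal entries $(3,3)$ and $(4,4)$ of $P_{kl}$ vanish and therefore rows/columns $3$ and $4$ vanish identically. Reading off the $(3,1)$ (or equivalently $(4,2)$) entry gives $b=\mathrm{Tr}[A\phi_{kl}]=0$ for all Hermitian unitary $A$, and again $\phi_{kl}=\phi_{lk}=0$ by Lemma \ref{lem:A}.

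The only nontrivial ingredient is the zero-diagonal/zero-row-column fact for PSD matrices, which follows from the $2\times 2$ principal minor $\bigl(\begin{smallmatrix}0 & x\\ \bar x & y\end{smallmatrix}\bigr)\ge 0$ forcing $x=0$; this is routine and will be invoked without proof. The only real subtlety to flag is that we must already know $P_{kl}$ is a bona fide principal submatrix of the Choi matrix, i.e.\ that the ``principal submatrix of a PSD matrix is PSD'' reduction applies at the level of the full block structure induced by the $|i\rangle\langle j|$ expansion; this is implicit in the derivation of (\ref{eq:Pkl}) above and needs no further work. Thus the proposition reduces to one application of the PSD row/column fact followed by one application of Lemma \ref{lem:A}.
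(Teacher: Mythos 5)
Your proof is correct and follows essentially the same route as the paper's: both extract $b=\mathrm{Tr}[A\phi_{kl}]=0$ from the positivity of a $2\times 2$ principal submatrix of $P_{kl}$ containing a vanishing diagonal entry (your zero-diagonal/zero-row-and-column fact is exactly the paper's $2\times 2$ minor/eigenvalue computation) and then invoke Lemma~\ref{lem:A} to conclude $\phi_{kl}=\phi_{lk}^{\dagger}=0$. The only difference is the cosmetic choice of which rows and columns of $P_{kl}$ to retain.
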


\begin{proof}
Assume that $\phi _{ll}=0$ or $\phi _{kk}=0$, but not both, so that either $%
(t_{ll}=0,t_{kk}=1)$, or $(t_{ll}=1,t_{kk}=0)$. Construct the principal
submatrix obtained by deleting rows and columns $1$ and $3$ from $P_{kl}$.
This leaves a principal submatrix with eigenvalues $(1\pm \sqrt{1+4|b|^{2}}%
)/8$. The positivity of these requires $b=\mathrm{Tr}[A\phi _{kl}]=0$, so
that by Lemma \ref{lem:A} $\phi _{kl}=\phi _{lk}^{\dag }=0$. When $\phi
_{ll}=\phi _{kk}=0$ the same principal submatrix$\allowbreak $ has
eigenvalues $\pm |b|$, so that again $\phi _{kl}=\phi _{lk}^{\dag }=0$.
\end{proof}

\begin{myproposition}
\label{prop1}If all of $\phi _{kk},\phi _{ll},\phi _{kl}\neq 0$ then $\phi
_{kk}=\phi _{ll}=\phi _{kl}=\phi _{lk}$.
\end{myproposition}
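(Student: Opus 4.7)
The plan is to exploit the fact that under the hypotheses $\phi_{kk},\phi_{ll},\phi_{kl}\neq 0$ the three traces $t_{kk},t_{ll},t_{kl}$ appearing in Eq.~(\ref{eq:Pkl}) are all equal to $1$, so the submatrix $P_{kl}$ carries $1/4$ on every diagonal entry \emph{and} on the off-diagonal positions $(1,4)$ and $(2,3)$. These off-diagonal ones saturate the Cauchy--Schwarz bound for a positive matrix, which should force dramatic rigidity on the remaining entries $a$, $b$, $c$.

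Concretely, I would first substitute $t_{kk}=t_{ll}=t_{kl}=1$ into Eq.~(\ref{eq:Pkl}) and then test positivity of $P_{kl}$ on the two explicit vectors $v=(1,0,0,-1)^{T}$ and $w=(0,1,-1,0)^{T}$. A direct evaluation gives $v^{\dagger}P_{kl}v=w^{\dagger}P_{kl}w=0$, simply because the corresponding $2\times 2$ principal submatrices have all four entries equal to $1/4$. Because $P_{kl}\succeq 0$ (as a principal submatrix of the Choi matrix $\mathcal{M}$, whose positivity is the CP assumption), every such zero-energy vector must lie in $\ker P_{kl}$, so $P_{kl}v=P_{kl}w=0$. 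Reading off the two nontrivial components of $P_{kl}v=0$ produces the scalar constraints $b=a$ and $c=b^{\ast}$, and $w$ yields the same pair; combined with $a,c\in\mathbb{R}$ these collapse to $a=b=c\in\mathbb{R}$.

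Translating back through $a=\mathrm{Tr}[A\phi_{kk}]$, $b=\mathrm{Tr}[A\phi_{kl}]$, $c=\mathrm{Tr}[A\phi_{ll}]$, the conclusion $a=b=c$ for \emph{every} Hermitian unitary $A$ reads $\mathrm{Tr}[A(\phi_{kk}-\phi_{kl})]=\mathrm{Tr}[A(\phi_{ll}-\phi_{kl})]=0$. Two applications of Lemma~\ref{lem:A} then give $\phi_{kk}=\phi_{kl}$ and $\phi_{ll}=\phi_{kl}$, whereupon Hermiticity $\phi_{lk}=\phi_{kl}^{\dagger}=\phi_{kk}^{\dagger}=\phi_{kk}$ closes the chain $\phi_{kk}=\phi_{ll}=\phi_{kl}=\phi_{lk}$. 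I do not anticipate a real obstacle here: the only non-mechanical ingredient is the observation that the $1/4$ entries sitting at positions $(1,4)$ and $(2,3)$ lie exactly on the Cauchy--Schwarz boundary and therefore hand us kernel vectors of $P_{kl}$ for free; everything else is one short computation followed by a single invocation of Lemma~\ref{lem:A}.
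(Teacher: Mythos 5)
Your proof is correct, and it reaches the paper's key intermediate identity $a=b=b^{\ast}=c$ by a genuinely different (and cleaner) piece of linear algebra. With $t_{kk}=t_{ll}=t_{kl}=1$ the vectors $v=(1,0,0,-1)^{T}$ and $w=(0,1,-1,0)^{T}$ do indeed satisfy $v^{\dagger}P_{kl}v=w^{\dagger}P_{kl}w=0$, and the standard fact that a null vector of the quadratic form of a positive semidefinite matrix lies in its kernel (write $P_{kl}=P_{kl}^{1/2}P_{kl}^{1/2}$) gives $P_{kl}v=0$; its second and third components read $i(b-a)/4=0$ and $i(c-b^{\ast})/4=0$, which together with $a,c\in\mathbb{R}$ collapse to $a=b=c$. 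Since this holds for every Hermitian unitary $A$, two applications of Lemma~\ref{lem:A} plus $\phi_{lk}=\phi_{kl}^{\dagger}$ and Hermiticity of $\phi_{kk}$ finish the argument exactly as required. The paper instead reaches $\mathrm{Tr}[A\phi_{kk}]=\mathrm{Tr}[A\phi_{kl}]=\mathrm{Tr}[A\phi_{lk}]=\mathrm{Tr}[A\phi_{ll}]$ by performing row/column operations on $P_{kl}$, conjugating the diagonal blocks by $Q=\tfrac{1}{\sqrt{2}}(I+i\sigma_{y})$, and imposing positivity of $2\times 2$ principal submatrices of the transformed matrix $P_{kl}^{\prime\prime}$ via explicit eigenvalue formulas. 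Your route buys brevity and robustness: it avoids the change of basis and the eigenvalue bookkeeping entirely, and it sidesteps the slips in the paper's intermediate statements (the text asserts $\alpha=\beta=\delta=0$ while the displayed minor $e(2,3)$ actually constrains $\gamma$, and literally enforcing $\alpha=0$ would over-constrain to $a=b=c=0$, i.e., $\phi_{kk}=0$; the correct set is $\beta=\gamma=\delta=0$, which is equivalent to your $a=b=b^{\ast}=c$). The paper's method has the mild advantage of staying uniformly within the ``positivity of principal submatrices'' toolkit used in Proposition~\ref{prop0}, but your kernel-vector observation is the more economical argument.
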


\begin{proof}
After a couple of elementary row and column operations on $P_{kl}$ we obtain:%
\begin{equation}
P_{kl}^{\prime }=\left( 
\begin{array}{cc}
1_{2} & B \\ 
B^{\dag } & 1_{2}%
\end{array}%
\right) ;\quad 1_{2}=\left( 
\begin{array}{cc}
1 & 1 \\ 
1 & 1%
\end{array}%
\right) ,\quad B=\left( 
\begin{array}{cc}
ib & ia \\ 
ic & ib^{\ast }%
\end{array}%
\right) .
\end{equation}%
Diagonalizing the two diagonal blocks $1_{2}$ using $Q=\frac{1}{\sqrt{2}}%
(I+i\sigma _{y})$ yields $P_{kl}^{\prime \prime }=Q^{\oplus 2}P_{kl}^{\prime
}(Q^{\dag })^{\oplus 2}$, where%
\begin{align*}
P_{kl}^{\prime \prime }& =\left( 
\begin{array}{cc}
C & D \\ 
D^{\dag } & C%
\end{array}%
\right) ;\quad C=\left( 
\begin{array}{cc}
2 & 0 \\ 
0 & 0%
\end{array}%
\right) ,\quad D=i\left( 
\begin{array}{cc}
\alpha & \beta \\ 
\gamma & \delta%
\end{array}%
\right) ; \\
\alpha & =(a+b+b^{\ast }+c)/2,\quad \beta =(a-b+b^{\ast }-c)/2, \\
\gamma & =(-a-b+b^{\ast }+c)/2,\quad \delta =(-a+b+b^{\ast }-c)/2.
\end{align*}%
Positivity of $P_{kl}$ implies that also $P_{kl}^{\prime \prime }>0$, so
that we can again apply the principal submatrix method. Let $e(i,j)$ denote
the eigenvalues of the $P_{kl}^{\prime \prime }$ submatrix obtained by
retaining only the $i$th and $j$th rows and columns of $P_{kl}^{\prime
\prime }$. We find $e(1,4)=1\pm \sqrt{1+|\beta |^{2}}$, $e(2,3)=1\pm \sqrt{%
1+|\gamma |^{2}}$ and $e(2,4)=\pm |\alpha |^{2}$. Since all these
eigenvalues must be positive we conclude that $\alpha =\beta =\delta =0$,
i.e., $\mathrm{Tr}[A\phi _{kk}]=\mathrm{Tr}[A\phi _{kl}]=\mathrm{Tr}[A\phi
_{lk}]=\mathrm{Tr}[A\phi _{ll}]$. Applying Lemma \ref{lem:A} we have $%
\mathrm{Tr}[A(\phi _{kk}-\phi _{kl})]=0$, so that $\phi _{kk}=\phi _{kl}$,
and similarly $\phi _{kl}=\phi _{lk}=\phi _{ll}$.
\end{proof}

It is simple to check that the only permissible case not covered by
Propositions \ref{prop0} and \ref{prop1} is when $\phi _{kk},\phi _{ll}\neq
0 $ and $\phi _{kl}=\phi _{lk}=0$; in this case we have no further
restrictions.

\begin{mylemma}
\label{lem:Phi}The matrix $\Phi \equiv \lbrack \phi _{ij}]$ can be
decomposed as $\Phi =\bigoplus_{\alpha }\Phi ^{(\alpha )}$, where $(\Phi
^{(\alpha )})_{(i,j)\in \mathcal{D}_{\Phi }^{(\alpha )}}=\phi ^{(\alpha )}$
(a constant) or $0$.
\end{mylemma}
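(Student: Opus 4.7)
The plan is to identify the blocks $\mathcal{D}_\Phi^{(\alpha)}$ with the equivalence classes of diagonal indices under the block-connected relation already introduced in the paper, and then use Propositions \ref{prop0} and \ref{prop1} to constrain the entries of $\Phi$ both across and within these classes. Block-connectedness is reflexive and symmetric by inspection and transitive by concatenation of paths, so it partitions the diagonal indices; I take $\mathcal{D}_\Phi^{(\alpha)}$ to consist of all pairs $(i,j)$ whose row and column both lie in the $\alpha$th class.

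First I would verify the block-diagonal structure. If $i$ and $j$ belong to different classes they cannot be joined by any path, in particular not by the length-one path $(i,j)$; hence $\phi_{ij} = 0$, giving $\Phi = \bigoplus_\alpha \Phi^{(\alpha)}$. Next I would show that within each block every non-zero entry equals one common operator $\phi^{(\alpha)}$. A singleton block $\{i\}$ is trivial: set $\phi^{(\alpha)} := \phi_{ii}$, which may be zero. For a non-singleton block I first claim every diagonal entry is non-zero, because $\phi_{ii}=0$ would by Proposition \ref{prop0} force row and column $i$ of $\Phi$ to vanish, isolating $i$ from the rest of the class. Then for any two indices $i,j$ in the block, fix a connecting path $i = i_1, i_2, \ldots, i_B = j$ with $\phi_{i_b i_{b+1}} \neq 0$; along each step the three non-vanishing hypotheses of Proposition \ref{prop1} are met (the diagonals by the preceding claim and the off-diagonal by the choice of path), yielding $\phi_{i_b i_b} = \phi_{i_{b+1} i_{b+1}} = \phi_{i_b i_{b+1}} = \phi_{i_{b+1} i_b}$. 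Telescoping, all diagonals of the block share a common value $\phi^{(\alpha)}$, and any non-zero off-diagonal $\phi_{ij}$ in the block likewise equals $\phi^{(\alpha)}$ by one further application of Proposition \ref{prop1}, while any vanishing off-diagonal is allowed as the third permissible case noted in the text.

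The subtlety I would flag is precisely that a block may contain off-diagonal positions with $\phi_{ij}=0$ sitting next to positions with $\phi_{ij}=\phi^{(\alpha)}\neq 0$; this is why the block-connected relation (requiring only \emph{some} connecting path of non-zero entries), rather than a stronger requirement that every off-diagonal within the block be non-zero, is the correct notion. The consequence for the proof is that the telescoping chain above must be carried along a fixed connecting path and cannot be applied to arbitrary off-diagonals of the block; once this is done carefully, the result is essentially combinatorial bookkeeping on top of Propositions \ref{prop0} and \ref{prop1}.
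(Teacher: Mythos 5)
Your argument is correct and follows essentially the same route as the paper's own proof: use Proposition \ref{prop0} to show that a zero diagonal entry kills its entire row and column (so non-singleton blocks have non-vanishing diagonals), then telescope Proposition \ref{prop1} along a block-connecting path to equate all non-zero entries within a block, with vanishing off-diagonals permitted as the remaining case. Your explicit treatment of block-connectedness as an equivalence relation and of singleton blocks is a slightly more careful bookkeeping of what the paper states informally, but the substance is identical.
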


\begin{proof}
Every matrix is a direct sum of blocks (possibly only one). Therefore our
task is to prove that the matrix elements of the $\alpha $th block $\Phi
^{(\alpha )}$ obey $(\Phi ^{(\alpha )})_{(i,j)\in \mathcal{D}_{\Phi
}^{(\alpha )}}=\phi ^{(\alpha )}$ or $0$. Collecting the results above we
see that there are only four cases: Proposition \ref{prop0} $\Longrightarrow 
$ (i) $\phi _{kk}=\phi _{kl}=\phi _{lk}=\phi _{ll}=0$, (ii) $\phi _{kk}=\phi
_{kl}=\phi _{lk}=0$ and $\phi _{ll}\neq 0$; Proposition \ref{prop1} $%
\Longrightarrow $ (iii) $\phi _{kk}=\phi _{kl}=\phi _{lk}=\phi _{ll}\neq 0$;
(iv) $\phi _{kk},\phi _{ll}\neq 0$ and $\phi _{kl}=\phi _{lk}=0$. First note
that if $\phi _{kk}=0$ then by cases (i)\ and (ii) also $\phi _{kl}=\phi
_{lk}=0$ $\forall l$, i.e., the row and column crossing at a zero diagonal
element must be zero. Now let $\Psi _{ij}^{(\alpha )}$ denote the $2\times 2$
principal submatrix $\{\Phi _{ii}^{(\alpha )},\Phi _{ij}^{(\alpha )};\Phi
_{ji}^{(\alpha )},\Phi _{jj}^{(\alpha )}\}$, $i\neq j$. Assume $\Phi
_{ij}^{(\alpha )}\neq 0$ and consider $\Psi _{ij}^{(\alpha )}$. Only case
(iii) applies, so $\Phi _{ii}^{(\alpha )}=\Phi _{ij}^{(\alpha )}=\Phi
_{ji}^{(\alpha )}=\Phi _{jj}^{(\alpha )}$. We can use this to show that any
two block-connected diagonal elements are equal. Indeed, assume that $\Phi
_{i_{1}i_{1}}^{(\alpha )}$ and $\Phi _{i_{B}i_{B}}^{(\alpha )}$ are both
non-zero and block-connected via the path $\{i_{b}\}_{b=2}^{B-1}$. Then by
case (iii) all elements of each member of the set of principal submatrices $%
\{\Psi _{i_{b}i_{b+1}}^{(\alpha )}\}_{b=1}^{B-1}$ are equal, and since
successive members always share a diagonal element, their elements are all
equal, to an element we call $\phi ^{(\alpha )}$. We have thus shown that $%
(\Phi ^{(\alpha )})_{(i,j)\in \mathcal{D}_{\Phi }^{(\alpha )}}=\phi
^{(\alpha )}$ or $0$. Finally, note that case (iv) with $\phi _{kk}\neq \phi
_{ll}$ can only arise between two different blocks, since if $\phi _{kk}\neq
\phi _{ll}$ the previous argument shows that they cannot be block-connected.
\end{proof}

We are now ready to conclude the proof of Lemma \ref{th:CP}: It follows from
Lemma \ref{lem:Phi} that $(\Phi )_{ij}=(\Phi ^{(\alpha )})_{ij}=\phi
^{(\alpha )}$ or $0$ for $(i,j)\in \mathcal{D}_{\Phi }^{(\alpha )}$.
Moreover, since $\rho _{SB}(0)$ is an SL-class state, $\mathrm{Tr}[\phi
^{(\alpha )}]=1$. Thus the total index set $\mathcal{D}_{\Phi }$ for the
initial state $\rho _{SB}(0)$ splits into a union of disjoint index sets $%
\mathcal{D}_{\Phi }^{(\alpha )}$, so that Eqs.~(\ref{eq:PrP}) and (\ref%
{eq:rSB}) are satisfied, where $\rho _{S}(0)=\sum_{\alpha }\sum_{(i,j)\in 
\mathcal{C}_{\Phi }^{(\alpha )}}\varrho _{ij}|i\rangle \langle
j|=\sum_{\alpha }\Pi _{\alpha }\rho _{S}(0)\Pi _{\alpha }=\sum_{\alpha
}\sigma _{S}^{(\alpha )}$, where $\sigma _{S}^{(\alpha )}=p_{\alpha }\rho
_{S}^{(\alpha )}$ and where $\rho _{B}^{(\alpha )}\equiv \phi ^{(\alpha )}$.
Here $\Pi _{\alpha }$ is the projector onto the subspace corresponding to
block $\alpha $ (as defined above). Next we need to show that the $\rho
_{B}^{\alpha }$'s are density matrices. From the properties of the $\phi
^{(\alpha )}$ we already have $\mathrm{Tr}\rho _{B}^{(\alpha )}=1$, so what
is left to prove is that $\rho _{B}^{(\alpha )}>0$. Indeed, by definition of
positivity $\langle i^{(\alpha )}|\langle \psi _{B}|\rho _{SB}(0)|i^{(\alpha
)}\rangle |\psi _{B}\rangle >0$ for any state $|i^{(\alpha )}\rangle $ in
the support of $\Pi _{\alpha }$ and any bath state $|\psi _{B}\rangle $.
Inserting $\rho _{SB}(0)=\sum_{\alpha }p_{\alpha }\rho _{S}^{(\alpha
)}\otimes \rho _{B}^{(\alpha )}$ into this inequality, we find $\langle \psi
_{B}|\rho _{B}^{(\alpha )}|\psi _{B}\rangle >0,\forall |\psi _{B}\rangle \in 
\mathcal{H}_{B}$. This completes the proof of necessity. Sufficiency: using
the spectral decomposition $\rho _{B}^{\alpha }=\sum_{j}\lambda _{j}^{\alpha
}|\lambda _{j}^{\alpha }\rangle \langle \lambda _{j}^{\alpha }|$ and
defining $E_{ij}^{\alpha }\equiv \langle \beta _{i}|U|\lambda _{j}^{\alpha
}\rangle \Pi _{\alpha }:\mathcal{H}_{S}\mapsto \mathcal{H}_{S}$, where $%
\{|\beta _{i}\rangle \}$ is an orthonormal basis for $\mathcal{H}_{B}$, we
have, using Eqs.~(\ref{dynamics1}) and (\ref{eq:PrP}): 
\begin{equation}
\rho _{S}(t)=\mathrm{Tr}_{B}[U\rho _{SB}(0)U^{\dagger }]=\sum_{\alpha
ij}\lambda _{i}^{\alpha }E_{ij}^{\alpha }\rho _{S}(0)E_{ij}^{\alpha \dagger
}.  \label{Linear-project}
\end{equation}%
Now we simply note that if $\rho _{SB}(0)$ satisfies Eq.~(\ref{eq:rSB}) with 
$\rho _{B}^{(\alpha )}>0$ (i.e., $\lambda _{j}^{\alpha }>0$), then Eq.~(\ref%
{Linear-project}) is already in the form of a CP map, with operation
elements $\{\sqrt{\lambda _{i}^{\alpha }}E_{ij}^{\alpha }\}_{\alpha ij}$.
\end{proof}

\textit{Discussion}.--- What is the physical meaning of fixing the bath-only
operators $\phi _{ij}$, as is required in our formulation? The answer is
that this corresponds to fixing the initial system-bath correlations:\ the
purely classical part is determined by the $\phi _{ii}$, while the quantum
part is determined by the $\phi _{ij}$ with $i\neq j$. Further, note that $%
\mathrm{Tr}[|j\rangle \langle i|\otimes I_{B}\rho _{SB}]=\varrho _{ij}%
\mathrm{Tr}[\phi _{ij}]$, so that non-SLness can also be written as $\mathrm{%
Tr}[|j\rangle \langle i|\otimes I_{B}\rho _{SB}]=0$, i.e., as $\langle
|i\rangle \langle j|\otimes I_{B},\rho _{SB}\rangle =0$ (Hilbert-Schmidt
inner product $\langle A,B\rangle \equiv \mathrm{Tr}[A^{\dag }B]$) and hence 
$\rho _{SB}$ must lie in the hyperplane orthogonal to $|i\rangle \langle
j|\otimes I_{B}$. Thus non-SL-class states are confined to a
lower-dimensional surface in the space of bipartite states, and must be 
\emph{sparse}. Note that, conversely, the SL condition $\mathrm{Tr}[\phi
_{ij}]=1$ yields $\mathrm{Tr}[|j\rangle \langle i|\otimes I_{B}\rho
_{SB}]=\varrho _{ij}$, which is not a constraint since $\varrho _{ij}$ is
arbitrary. Moreover, using a mapping from affine to linear maps \cite%
{Jordan:05}, it is not hard to show that the zero-measure subset of non-SL
states does not spoil Theorem \ref{H-States}, i.e., the QDP\ (\ref{dynamics1}%
) is a linear, Hermitian map from $\rho _{S}(0)\mapsto \rho _{S}(t)$ for 
\emph{any} initial state $\rho _{SB}(0)$.

\textit{Conclusions}.---In this work we have identified the conditions for
the validity of quantum subsystem dynamics. In particular, we have found the
precise initial state conditions for the ubiquitous class of CP maps. This
establishes a foundation for their widespread use in quantum information and
open systems theory. We have also shown that the basic quantum mechanical
transformation (\ref{dynamics1}) is always representable as a Hermitian map
between the initial and final system states. This result establishes that
quantum subsystem dynamics is always a meaningful concept.

\textit{Proofs}.--- In order to prove Theorem \ref{H-States} we first need:

\begin{mylemma}
\label{pr:QP=HM}If $\rho _{SB}(0)$ is an SL-class state then the QDP (\ref%
{dynamics1}) is a linear map $\Phi_\mathrm{L}: \rho _{S}(0)\mapsto\rho
_{S}(t)$.
\end{mylemma}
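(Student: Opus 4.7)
The plan is to exploit the SL-class structure of Definition~\ref{def:SL} directly. Under that assumption, the only $\phi_{ij}$'s that contribute to $\rho_{SB}(0)$ are those with $\mathrm{Tr}[\phi_{ij}]=1$; all others vanish identically. Consequently the only coefficients $\varrho_{ij}$ that appear anywhere in $\rho_{SB}(0)$ are those indexed by $(i,j)\in\mathcal{C}$, and these are precisely the matrix elements of $\rho_S(0)$ in the fixed basis $\{|i\rangle\}$. The QDP~(\ref{dynamics1}) then becomes manifestly a linear operation on the matrix elements of $\rho_S(0)$.

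Concretely, first I would invoke the SL condition to rewrite Eq.~(\ref{eq:rho_SB}) as
\begin{equation}
\rho_{SB}(0) = \sum_{(i,j)\in\mathcal{C}} \varrho_{ij}\,|i\rangle\langle j|\otimes\phi_{ij},
\end{equation}
and recall that then $\rho_S(0)=\sum_{(i,j)\in\mathcal{C}}\varrho_{ij}\,|i\rangle\langle j|$, so the $\{\varrho_{ij}\}_{(i,j)\in\mathcal{C}}$ are exactly the matrix entries of $\rho_S(0)$. Substituting into Eq.~(\ref{dynamics1}) and pulling the sum outside, using linearity of unitary conjugation and of $\mathrm{Tr}_B$, yields
\begin{equation}
\rho_S(t) = \sum_{(i,j)\in\mathcal{C}} \varrho_{ij}\,\mathrm{Tr}_B\bigl[U\,(|i\rangle\langle j|\otimes\phi_{ij})\,U^{\dagger}\bigr].
\end{equation}
Defining $\Phi_{\mathrm{L}}$ on the basis $\{|i\rangle\langle j|\}$ of $\mathfrak{M}_{d_S}$ by $\Phi_{\mathrm{L}}[|i\rangle\langle j|]=\mathrm{Tr}_B[U(|i\rangle\langle j|\otimes\phi_{ij})U^{\dagger}]$ for $(i,j)\in\mathcal{C}$ and $\Phi_{\mathrm{L}}[|i\rangle\langle j|]=0$ otherwise, then extending linearly to all of $\mathfrak{M}_{d_S}$, gives $\rho_S(t)=\Phi_{\mathrm{L}}[\rho_S(0)]$ by construction. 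An operator-sum form compatible with Theorem~\ref{th1} can then be read off by inserting spectral/SVD decompositions of the $\phi_{ij}$ and a basis $\{|\beta_m\rangle\}$ of $\mathcal{H}_B$ under the partial trace.

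The only conceptually delicate point---and where the SL assumption is indispensable---is well-definedness of the putative map. For a generic non-SL $\rho_{SB}(0)$, there can be $(i,j)\notin\mathcal{C}$ with $\mathrm{Tr}[\phi_{ij}]=0$ yet $\phi_{ij}\neq 0$: such coefficients $\varrho_{ij}$ do not appear in $\rho_S(0)$, but they couple through the nonzero $\phi_{ij}$ and do appear in $\rho_S(t)$, producing an irreducible ambiguity that forbids any map $\rho_S(0)\mapsto\rho_S(t)$. The SL condition precisely eliminates this obstruction by forcing $\phi_{ij}=0$ whenever $(i,j)\notin\mathcal{C}$, after which the construction above goes through with no further work. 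I expect this well-definedness check, rather than any computation, to be the substantive content of the proof.
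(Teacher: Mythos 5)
Your proposal is correct and follows essentially the same route as the paper: the SL condition guarantees that only the coefficients $\varrho_{ij}$ with $(i,j)\in\mathcal{C}$ --- i.e., the matrix elements of $\rho_S(0)$ --- enter the dynamics, and the map defined on the matrix units $|i\rangle\langle j|$ and extended linearly reproduces $\rho_S(t)$; the paper merely makes the operator-sum form of Theorem~\ref{th1} explicit by writing $\varrho_{ij}|i\rangle\langle j|=(|i\rangle\langle i|)\,\rho_S(0)\,(|j\rangle\langle j|)$ and inserting the SVD of each $\phi_{ij}$, exactly as you sketch at the end. One minor caveat on your closing remark: the paper's Discussion argues that non-SL states still yield a (Hermitian, linear) map after an affine-to-linear conversion, so a nonzero traceless $\phi_{ij}$ produces a fixed affine offset rather than ``forbidding any map''; this does not affect your proof of the stated implication.
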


\begin{proof}
Consider the singular value decomposition (SVD) $\phi _{ij}=\sum_{\alpha
}\lambda _{\alpha }^{ij}|x_{ij}^{\alpha }\rangle \langle y_{ij}^{\alpha }|$,
where $\lambda _{\alpha }^{ij}$ are the singular values and $|x_{ij}^{\alpha
}\rangle $ ($\langle y_{ij}^{\alpha }|$) are the right (left) singular
vectors. Let $\{|\psi _{k}\rangle \}$ be an orthonormal basis for the bath
Hilbert space ${\mathcal{H}}_{B}$, and define the system operators $%
V_{kij}^{\alpha }\equiv \langle \psi _{k}|U_{SB}|x_{ij}^{\alpha }\rangle $, $%
W_{kij}^{\alpha }\equiv \langle \psi _{k}|U_{SB}|y_{ij}^{\alpha }\rangle $.
Since $\rho _{SB}(0)$ is an SL-class state, a QDP (\ref{dynamics1})
generated by an arbitrary unitary evolution $U_{SB}$ yields [recall $%
\mathcal{C}\equiv \{(i,j)|\mathrm{Tr}[\phi _{ij}]=1\}$]: 
\begin{align}
\rho _{S}(t)& =\mathrm{Tr}_{B}[\rho _{SB}(t)]=\sum_{ij}\varrho _{ij}\mathrm{%
Tr}_{B}[U_{SB}|i\rangle \langle j|\otimes \phi _{ij}U_{SB}^{\dag }]  \notag
\\
& =\sum_{(i,j)\in \mathcal{C};k,\alpha }\lambda _{\alpha }^{ij}\varrho
_{ij}V_{kij}^{\alpha }|i\rangle \langle j|(W_{kij}^{\alpha })^{\dag }.
\label{eq:1}
\end{align}%
Now note that $P_{i}\rho _{S}(0)P_{j}=\varrho _{ij}|i\rangle \langle j|$,
where $P_{i}\equiv |i\rangle \langle i|$ is a projector and $(i,j)\in 
\mathcal{C}$. Therefore:%
\begin{eqnarray}
\Phi _{\mathrm{L}}[\rho _{S}(0)] &\equiv &\sum_{(i,j)\in \mathcal{C}%
;k,\alpha }\lambda _{\alpha }^{ij}V_{kij}^{\alpha }P_{i}\rho
_{S}(0)P_{j}(W_{kij}^{\alpha })^{\dag }  \label{eq:p} \\
&=&\sum_{(i,j)\in \mathcal{C};k,\alpha }\lambda _{\alpha }^{ij}\varrho
_{ij}V_{kij}^{\alpha }|i\rangle \langle j|(W_{kij}^{\alpha })^{\dag },
\label{Linear}
\end{eqnarray}%
which equals $\rho _{S}(t)$ according to Eq.~(\ref{eq:1}). This defines the
linear map $\Phi _{\mathrm{L}}=\{E_{ijk\alpha },E_{ijk\alpha }^{\prime }\}$,
whose left and right operation elements are $\{E_{ijk\alpha }\equiv \sqrt{%
\lambda _{\alpha }^{ij}}V_{kij}^{\alpha }P_{i}\}$ and $\{E_{ijk\alpha
}^{\prime }\equiv \sqrt{\lambda _{\alpha }^{ij}}W_{kij}^{\alpha }P_{j}\}$,
respectively.
\end{proof}

We need to show that $\rho _{S}(t)=\Phi _{\mathrm{H}}[\rho _{S}(0)]=\rho
_{S}^{\dag }(t)$ if $\rho _{S}(0)=\rho _{S}^{\dag }(0)$. This is now a
simple calculation which uses Eqs.~(\ref{eq:p}) and (\ref{Linear}), the
definitions of $V_{kij}^{\alpha }$ and $W_{kij}^{\alpha }$, $\phi _{ij}=\phi
_{ji}^{\dag }$, and the SVD of $\phi _{ij}$.

\begin{proof}[Proof of Theorem \protect\ref{H-States}]
We need to show that $\rho _{S}(t)=\Phi _{\mathrm{H}}[\rho _{S}(0)]=\rho
_{S}^{\dag }(t)$ if $\rho _{S}(0)=\rho _{S}^{\dag }(0)$. This is now a
simple calculation which uses Eqs.~(\ref{eq:p}) and (\ref{Linear}), the
definitions of $V_{kij}^{\alpha }$ and $W_{kij}^{\alpha }$, $\phi _{ij}=\phi
_{ji}^{\dag }$, and the SVD of $\phi _{ij}$.
\end{proof}

\begin{proof}[Proof of Lemma \protect\ref{lem:A}]
Since $A$ is unitary and Hermitian its eigenvalues are both roots of unity
and real, i.e., it can always be parameterized in the form $A=UDU^{\dagger }$%
, where $U$ is unitary and the diagonal matrix $D$ has diagonal elements $%
\pm 1$. Consider two special choices of $D$: $D_{1}=\mathrm{diag}%
(+1,+1,...,+1)=I$ and $D_{2}=\mathrm{diag}(-1,+1,+1,...,+1)=I-2|0\rangle
\langle 0|$. Since $\mathrm{Tr}[D_{1}U^{\dagger }XU]=\mathrm{Tr}%
[D_{2}U^{\dagger }XU]=0$ we find $\mathrm{Tr}[(D_{1}-D_{2})U^{\dagger }XU]=0$%
, or $\mathrm{Tr}[|0\rangle \langle 0|U^{\dagger }XU]=0$. However, $U$ is
arbitrary, so that $\langle \psi |X|\psi \rangle =0$, $\forall |\psi \rangle 
$ ($|\psi \rangle =U|0\rangle $). This can only be true if $X=0$.
\end{proof}

\textit{Acknowledgements}.--- This work was funded by NSF Grants No.
CCF-0523675 and CCF-0726439 (to D.A.L).

\end{document}